\documentclass[preprint,12pt]{elsarticle}
\usepackage[novbox]{pdfsync}
\usepackage{amsfonts,amsmath,amssymb,mathrsfs,url,
esint,subcaption,
comment}
\usepackage{fancyhdr}
\usepackage{float,multirow,lipsum}
\usepackage{array,listings,verbatim}
\usepackage{tabularx}
\usepackage{nccmath}
\usepackage[utf8]{inputenc}
\usepackage{todonotes}

\usepackage{graphicx}
\graphicspath{{figures/}}
\graphicspath{{Notebook/}}
\usepackage[normalem]{ulem}
\usepackage{cleveref}
\usepackage{lineno}
\usepackage{color}
\usepackage{textcomp}
\usepackage{geometry}
\usepackage{enumitem}

\usepackage{graphicx}
\usepackage{epsfig}
\usepackage{wasysym}
\usepackage{empheq}
\usepackage{url}
\usepackage{tikz}
\usetikzlibrary{positioning}
\usetikzlibrary{arrows}
\usetikzlibrary{arrows.meta}
\usetikzlibrary{calc}
\newtheorem{theorem}{Theorem}
\newtheorem{remark}{Remark}
\newtheorem{proposition}{Proposition}
\newtheorem{definition}{Definition}
\newtheorem{corollary}{Corollary}
\newtheorem{example}{Example}
\newtheorem{lemma}{Lemma}
\newtheorem{assumption}{Assumption}
\newtheorem{claim}{Claim}

 \def\eeD{\end{definition}} \def\beD{\begin{definition}}
\def\beR{\begin{remark}} \def\eeR{\end{remark}}
\def\beL{\begin{lemma}} \def\eeL{\end{lemma}}
\def\beC{\begin{corollary}
  }\def\eeC{\end{corollary}}
  \def\beT{\begin{theorem}}\def\eeT{\end{theorem}}
  \def\beP{\begin{proposition}} \def\eeP{\end{proposition}}
\def\beXa{\begin{example}} \def\eeXa{\end{example}}
\def\beA{\begin{assumption}} \def\eeA{\end{assumption}}

\newtheorem{ass}{Assumption}
\def\beAs{\begin{ass}
  }
\def\eeAs{\end{ass}}
\def\fn{\symbolfootnote}
\def\NGM{next generation matrix}

\baselineskip=14.4pt \topmargin=-0.2cm\textwidth=16.45cm 
\textheight=23cm
\oddsidemargin=-0.65cm\evensidemargin=-0.65cm\headsep=20pt

\setlength{\textheight}{21cm} \setlength{\textwidth}{15cm}
\setlength{\parindent}{0.4cm} \setlength{\topmargin}{0cm}
\setlength{\oddsidemargin}{0.8cm} \setlength{\evensidemargin}{0.8cm}

\newcommand{\RN}[1]{\textup{\uppercase\expandafter{\romannumeral#1}}}

\def\L{b}

\def\no{\nonumber}\def\bep{\begin{pmatrix}} \def\eep{\end{pmatrix}}

\def\com{compartment}

\providecommand{\pp}[1]{\left[#1\right]} 
\providecommand{\pr}[1]{\left(#1\right)} 

\usepackage{listings}
\usepackage{color}

\definecolor{dkgreen}{rgb}{0,0.6,0}
\definecolor{gray}{rgb}{0.5,0.5,0.5}
\definecolor{mauve}{rgb}{0.58,0,0.82}

\lstset{frame=tb,
  language=Java,
  aboveskip=3mm,
  belowskip=3mm,
  showstringspaces=false,
  columns=flexible,
  basicstyle={\small\ttfamily},
  numbers=none,
  numberstyle=\tiny\color{gray},
  keywordstyle=\color{blue},
  commentstyle=\color{dkgreen},
  stringstyle=\color{mauve},
  breaklines=true,
  breakatwhitespace=true,
  tabsize=3
}

 \newcommand*{\Scale}[2][4]{\scalebox{#1}{$#2$}}%
\def\AB{$(A,B)$ Arino-Brauer epidemic models}
\def\sd{\s_{dfe}}\def\rd{r_{dfe}}\def\a{\alpha}\def\va{\overset{\rightarrow}{\a}} 
\def\b{\beta} \def\g{\gamma}    \def\de{\delta}

   \def\bb{\bff \beta} \newcommand{\bff}[1]{{\mbox{\boldmath$#1$}}}
\def\eqr{\eqref}  
\def\no{\nonumber}
\def\L{\Lambda} \def\La{\Lambda}\def\mR{{\mathcal R}_0}\def\nR{\mathcal R}
\def\fr{\frac}
\providecommand{\pp}[1]{\left[#1\right]} 
\providecommand{\pr}[1]{\left(#1\right)}
\def\im{\item}\def\com{compartment}    

\def\vx{\overset{\rightarrow}{x}}\def\vy{\overset{\rightarrow}{y}}

\def\bep{\begin{pmatrix}} \def\eep{\end{pmatrix}}
\def\bev{\begin{vmatrix}} \def\eev{\end{vmatrix}}
\newcommand{\e}{\;\mathsf e}\def\PF{Perron-Frobenius}
\def\bc{\begin{cases}
  }      
\def\ec{\end{cases}}  
  \def\qu{\quad} 

  \newcommand{\beq}{\begin{eqnarray}
    }
\def\eeq{\end{eqnarray}}
   \newcommand{\be}[1]{\begin{equation}\label{#1}}
\newcommand{\ee}{\end{equation}}
\def\bea{\begin{eqnarray*}}
  \def\Prf{{\bf Proof:} }  
\def\eea{\end{eqnarray*}}  \def\la{\label}\def\fe{for example }   \def\ith{it holds that }\def\saty{satisfy}
\def\sats{satisfies}
\newcommand{\s}{\mathsf s}

\renewcommand{\i}{\mathsf i}
\renewcommand{\r}{\mathsf r}
\newcommand{\f}{\mathsf f}
\newcommand{\h}{\mathsf h}
\newcommand{\p}{\mathsf p}
\def\vi{\overset{\rightarrow}{i}} 
\long\def\symbolfootnote[#1]#2{
\begingroup
\def\thefootnote{\fnsymbol{footnote}}\footnote[#1]{#2}
\endgroup}
\def\fn{\symbolfootnote}
\def\NGM{next generation matrix} 
\def\bzn{basic replacement number}
\def\brn{basic reproduction  number}\def\DFE{disease-free equilibrium}
\def\BEN{\begin{enumerate}}  \def\BI{\begin{itemize}}
\def\EEN{\end{enumerate}}   \def\EI{\end{itemize}}
\def\ME{mathematical epidemiology}\def\QED{\hfill {$\square$}\goodbreak \medskip}\def\eeD{\end{defn}} \def\beD{\begin{defn}}
\newcommand{\E}{{\mathbb E}}\def\resp{respectively}

\def\QED{\hfill {$\square$}\goodbreak \medskip}

\renewcommand{\theta}{\vartheta}

\renewcommand{\thefootnote}{\fnsymbol{footnote}}

\numberwithin{equation}{section}

\def\bc{\begin{cases}
  }     
\def\ec{\end{cases}}  
  \def\qu{\quad}

\def\bea{\begin{eqnarray*}}
    
\def\eea{\end{eqnarray*}} \def\la{\label}\def\fe{for example }   \def\ith{it holds that } 
    \def\sats{satisfies}  \def\saty{satisfy}        

\def\I{\infty} 
  \def\T{\widetilde}
  
\def\BEN{\begin{enumerate}}  \def\BI{\begin{itemize}}
\def\EEN{\end{enumerate}}   \def\EI{\end{itemize}} \def\im{\item}   \def\eqr{\eqref}  
\def\no{\nonumber} 

\def\mR{\mathcal R}

\def\g{\gamma}   \def\de{\delta}  \def\b{\beta}

  \def\resp{respectively}   
 \def\eqr{\eqref}

\def\eeD{\end{definition}} \def\beD{\begin{definition}}
\def\beR{\begin{remark}} \def\eeR{\end{remark}}
\def\beL{\begin{lemma}} \def\eeL{\end{lemma}}

      \def\bb{\bff \b}   \def\vi{\vec \i \;}  \def\vx{\; \vec {\mathsf x}}
    \def\va{\vec \alpha } 
    \def\vy{\; \vec {\mathsf y}} 
    \def\beP{\begin{proposition}} \def\eeP{\end{proposition}}
    \def\beC{\begin{claim}} \def\eeC{\end{claim}}

    \def\m0{{\mathcal R}_0}  

    \long\def\symbolfootnote[#1]#2{
\begingroup
\def\thefootnote{\fnsymbol{footnote}}\footnote[#1]{#2}
\endgroup}
\def\fn{\symbolfootnote}\def\bzn{basic replacement number}
\def\brn{basic reproduction  number}\def\DFE{disease-free equilibrium}\def\com{compartment}
\def\det{deterministic }

 \def\brn{basic reproduction number } \def\sats{satisfies }
\newcommand{\figu}[3]{
\begin{figure}[H]
\centering
\includegraphics[scale=#3]{#1}
\caption{#2\label{f:#1}}
\end{figure}
}

\def\nR{\mathcal R}\def\brn{basic reproduction number }
 \def\DFE{disease free equilibrium}\def\rd{\r_{dfe}} \def\sd{\s_{dfe}}
 


\journal{Arxiv}

\begin{document}

\begin{frontmatter}

\title{Explicit mathematical epidemiology results on age renewal kernels and  $R_0$ formulas are often consequences of the rank one property of the \NGM}

 \author[a]{ Florin Avram}
 \affiliation[a]{organization={Departement of Mathematics,University of Pau},
             city={Pau},
             postcode={64000},
             country={France,},
               addressline={ avramf3@gmail.com},
              }
\author[b]{Rim Adenane}
 \affiliation[b]{organization={Departement of Mathematics,Ibn Tofail University},
             city={Kenitra},
             postcode={14000},
             country={Morocco,},
              addressline={rim.adenane9@gmail.com}
            }

 \author[c]{ Dan Goreac}
 \affiliation[c]{organization={School of Mathematics and Statistics, Shandong University},
 city={  Weihai},
 postcode={264209},
 country={  China,},
 addressline={ LAMA, Univ Gustave Eiffel,  UPEM, Univ Paris Est Creteil, France,  dan.goreac@u-pem.fr}
 }
 
    \author[d]{Andrei Halanay}
 \affiliation[d]{organization={Department of mathematics and informatics, Polytechnic University of Bucharest},
             city={Boucharest},             
             country={Romania, },
             addressline={ andrei.halanay@upb.ro}.
            }


\begin{abstract}
A very large class of ODE epidemic models  \eqr{SYRPH} discussed in this paper  enjoys the property of admitting also an integral renewal formulation, with respect to an ``age of infection kernel" $a(t)$   which has a matrix exponential form \eqr{renker}.
We observe first that a very short proof of this fact is available when there is only one susceptible \com, and when its associated ``new infections" matrix has rank one. In this case, $a(t)$ normalized to have integral 1, is precisely the  probabilistic law which governs the time spent in all the ``infectious states associated to the susceptible \com", and the normalization is precisely the \bzn. The Laplace transform (LT) of $a(t)$ is a generalization of the \bzn,  and its structure  reflects the laws of the times spent in each infectious state. Subsequently, we show that these facts admit extensions to processes
with several susceptible classes, provided that all of them have a new infections matrix of rank one.  These results reveal that the ODE epidemic models highlighted below have also interesting probabilistic properties.
\end{abstract}

\begin{keyword} 
stability, basic replacement number, basic reproduction number, age of infection kernel, several susceptible compartments, Diekmann  matrix kernel, generalized linear chain trick, Erlangization, Coxianization.
\end{keyword}


\end{frontmatter}

\tableofcontents

\section{Introduction}\la{s:ME}

\indent {\bf Motivation}. Mathematical epidemiology has one fundamental law, which identifies, under certain conditions \cite{Van}, the threshold parameter $R_0$ for the stability of the \DFE\ as the Perron-Frobenius eigenvalue of the ``\NGM" (NGM). Furthermore, explicit $R_0$ formulas are also available when  the \NGM\ has  rank one \cite{Arino,AABBGH}. It may be argued that this result is not that important, since just computing the eigenvalues of the \NGM\ with any symbolic CAS will reveal $R_0$.
However, epidemiologic models whose NGM  has  rank one  have a further important property which does not seem to be  known well enough: it is the existence of   probabilistic renewal kernels -- see \cite{Diek18,AABBGH}, which may be associated to the times spent by infectious individuals in  groups of several infectious states.
This property  is very important both since it
applies to a   very large proportion  of  the models used in applied studies of Covid-19, influenza, ILI (influenza like illnesses), etc., and since it suggests new ways
of calibration --see below. Note that this result implies easily that of \cite{Arino,AABBGH}, by integrating the renewal kernel.
 The lack of awareness for these two fundamental results motivated us to first review them here   for the case of one susceptible \com, and to provide extensions to the case of several susceptible classes. Note especially the formulas \eqr{RMat} and \eqr{kerMat},  which seem to  be new.

 {\bf A bird's eye view of mathematical epidemiology}. Mathematical epidemiology may be said to have started with the  celebrated paper ``A contribution to the mathematical
theory of epidemics" \cite{Ker} on the 1906 plague
epidemic in Bombay, which introduced the SIR (susceptible-infected-recovered) model.\fn[5]{Note that this paper considers the renewal equation formulation, and not just the SIR ODE model.} Bacaer shows that better results can be obtained by adding compartments for rats and fleas \cite[(7-11)]{bacaer2012model},  whose importance   had been overlooked in the first study. In this spirit, each of the three fundamental compartments
S,I,R, could be replaced by classes of several compartments, specific to each epidemics, to be revealed by further analyses.

 The most fundamental aspect  of mathematical epidemiology is the existence of at least two possible fixed states: the boundary ``\DFE" (DFE), corresponding to the elimination of all compartments $\vi$ involving sickness, which may be easily found from the system of non-infectious equations with $\vi=0$, and the ``endemic point" which replaces the DFE when elimination of the sickness is impossible (without intervention, such as quarantine, vaccination, etc...). Note this further induces a partition of all the coordinates and the equations into ``infectious" (eliminable), and the others, or ``non-infectious". The non-infectious may further be divided into recovered (output compartments) and
 susceptible (input compartments). The latter    are  very important;  for example, older individuals may be more susceptible than young ones, or viceversa, and  therefore differentiating susceptibles into several groups may be crucial.

The most important result of \ME\ is the ``\brn"\ $\mR$ threshold theorem concerning the stability of the DFE, already encountered in \cite{Ker}.

There are two flavors of \ME\ and two corresponding  formulas for the \brn:\BEN \im One, for ODE models, identifies $\mR$, under conditions specified in  \cite{Diek,Van,Van08},   via a three-steps ``\NGM" (NGM) procedure:
\BEN \im computing a ``pre-NGM" matrix which involves only the {\bf infectious equations}, \im substituting into it the coordinates of the DFE, which is obtained using only {\bf non-infectious equations}, and
\im computing the spectral radius of the resulting NGM matrix.
\EEN

\im The ``non-Markovian/renewal"  approach adds one further crucial aspect to \ME: the change in infectivity as function of age of  the infection at the time when  transmission took place (which was assumed to be exponentially distributed under the previous approach). This factor enters the model via an ``age of infection kernel", and the \brn\ is computed as the integral of  this kernel   \cite{Diek,Diek10}.
\EEN

The restriction to SIR models with \NGM\ of rank-one renders the connection between the two approaches  very elementary  and yields powerful explicit formulas -- see \eqr{RMat} and \eqr{kerMat} below.

{\bf Contents}. Section \ref{s:SIR} recalls the definition of  SIR-PH-FA models. Section \ref{s:ker} computes the age of infection kernel for these models with one susceptible class, when the ``new infections" matrix $B$ is of rank one. Section \ref{s:exa} provides  examples.  Section \ref{s:ext} provides an extension to the case of several susceptible \com s. Section \ref{s:GLCT} discusses the relation of rank one ODE epidemic models to the generalized linear chain trick (GLCT) formalism. Conclusions and further work are sketched in section \ref{s:conc}.

\section{SIR-PH-FA epidemic models \la{s:SIR}}

{\bf The SIR-PH {(phase-type)} epidemic
models} \cite{Riano} are a particular case of the  \AB\ studied in \cite{AAK,AABBGH}, in which there is  only one input
class $\s$ (and, less importantly, only one output
class ${\mathsf r}$). {One may think of this class  of  models as of processes in which the class I has been  replaced by a transient Markov chain, the time of transition of which models the law (distribution) of the total infectious period. The modeling of infectious laws more general than the exponential is an old concern in epidemiology-- see \fe\ \cite{Feng07,Hurtado21} and references therein. Our concern is not only statistical, but also in identifying laws (principles) which hold for large classes of models.}  Assuming one input
class allows decomposing the \brn\ $\mR$, defined  as the expected number of secondary cases produced by a {\bf typical}
infectious individual during its time of infectiousness-- see \cite{Heth}, which serves also as stability threshold of the DFE,  as
\be{R} \mR=\sd \nR, \ee
where $\sd$ is the  number of susceptibles at the DFE. $\nR$ is called \bzn\ (of {\bf one} susceptible individual). These models include a large number of  epidemic models, like for example for COVID and ILI (influenza like illnesses). After further ignoring certain quadratic terms for the varying population model \cite{AABBGH}, we arrive at a {\bf SIR-PH-FA model}, defined by:
\begin{align}
\label{SYRPH}
\vi'(t) &=  \vi (t) \pp{\s(t)  \; B+ A
- Diag\pp{\bff{\de}+\La \bff 1}}:=
 \vi (t) \pp{\s(t) B-  V}  \no\\
 \s'(t) &= - \s(t) \vi(t) \bb+\La -\pr{ \La + \g_s } \s(t)+ \g_r {\mathsf r}(t), \nonumber\\
 \bb&=\bep\bb_1\\  \vdots\\\bb_n\eep, \quad \bb_i=(B \bff 1)_i=\sum_{j} B_{i,j}, i=1,...,n\nonumber\\
{\mathsf r}'(t) &=  \vi (t) \bff a+ \s(t)  \g_s   - (\g_r+\La){\mathsf r}(t), \qu \bff a=(-A) \bff 1.
 \end{align}
 Here,
\BEN
\im $\s(t) \in \mathbb{R_+}$ represents the set of individuals susceptible to be infected (the beginning state).
\im ${\mathsf r}(t) \in \mathbb{R_+} $  models
recovered  individuals (the end state).
\im $\g_r$ gives  {the rate at which recovered individuals lose immunity,}
and  $\g_s$ gives the rate at which individuals are vaccinated (immunized). These two transfers connect directly the beginning and end states (or classes).
\im the row vector $\vi(t) \in \mathbb{R}^n$  represents the set of individuals in different disease states.
\im $\L >0$ is the per individual death rate, and it equals also the global birth rate (this is due to the fact that this is a model for proportions).

\im  $A$ is a $n\times n$ Markovian sub-generator matrix which describes transfers between the disease classes. Recall that a Markovian sub-generator
matrix for which each
off-diagonal entry $A_{i,j}$, $i\neq j$, satisfies $A_{i,j}\geq 0$, and such that the row-sums are non-positive,  with  at least one inequality being strict.\fn[4]{Alternatively, $-A$ is a  non-singular M-matrix \cite{Arino}, i.e.
  a real matrix $V$ with  $ v_{ij} \leq 0, \forall i \neq j,$ and having eigenvalues
whose real parts are nonnegative \cite{plemmons1977m}.}

The fact a Markovian sub-generator appears  in  our ``disease equations" suggests that certain probabilistic concepts intervene in our deterministic model, and this is indeed the case--see below. Note also that typical epidemic models \saty\ $A_{i,j} A_{j,i}=0$, $i\neq j$, and so this matrix may be arranged  to be triangular.

\im $\bff \de \in \mathbb{R_+}^n$ is a column vectors giving the death rates caused by the epidemic in the  disease \com s. The matrix $-V$, which combines $A$  and the birth and death rates $\La, \bff \de$, is also a Markovian sub-generator.

\im $ B $ is a $n \times n$ matrix (called sometimes ``new infections" matrix). {We will denote by $\bb$ the vector containing the sum of the entries in each row
of $B$, namely, $\bb= B \bff1$.} Its components $ \bb_i$ represent the
{\bf total force of infection} of the  {disease} class $i$, and $\s(t) \vi(t) \bb$ represents the
total flux which must leave class $\s$.
Finally, each  {entry} $B_{i,j}$, multiplied by $\s$,  represents the
force of infection from the  {disease} class $i$ onto class $j$, and our essential assumption below will be that $B_{i,j}=\beta_i \a_j,$ i.e. that all forces of infection are distributed among the infected classes conforming to the same probability vector $\va=(\a_1,\a_2,...,\a_n)$.

\EEN

\beR The matrices $B$ and $-V$,  intervene in the formulas related to the \NGM\ approach.\eeR

\beR \la{r:PF} \BEN \im   Note  the factorization of the  equation for the diseased \com s $\vi$, which ensures the existence of a fixed point where these \com s vanish, and  implies a representation of $\vi$ in terms of $\s$:
\be{irep} \vi(t)=\vi(0)e^{- t V + B \int_0^t s(\tau) d \tau}=\vi(0)e^{\pp{- t Id + B V^{-1} \int_0^t s(\tau) d \tau} V}. \ee

{In this representation intervenes an essential character of our story, the matrix $B V^{-1},$ which is proportional to the \NGM\ $\sd B V^{-1}$}. A  second representation \eqr{vc} below will allow us to embed our models in the interesting class of distributed delay/renewal models, in the case when
$B$ has rank one.
\im
When $B=\bb \va,$ it holds by the Arino-Brauer formula \cite{Arino} generalized in \cite{AABBGH} that
\be{AB}\mR= \lambda_{PF}(\sd B V^{-1})=\sd \va V^{-1} \bb,\ee
where $\lambda_{PF}$ denotes the Perron-Frobenius eigenvalue (positive eigenvalue of maximum modulus).
 \la{r:PF} Note the proof is very simple. The eigenvector must be $v=\bb$, and plugging $\bb$ in $\sd \bb \va V^{-1} v= \mR v$ yields $ \bb (\sd \va V^{-1} \bb)= \mR \bb$, and  the result.

 \im  We added the   FA (first approximation of an exact model for   proportions) to the name of these models, following \cite{AABBGH}, to differentiate them from their   version with demography.
 \EEN
\eeR

\section{Associated Markovian semi-groups,  age of infection kernels, and  an $\nR$  formula for  SIR-PH-FA models with one susceptible class and $B$ of rank one \la{s:ker}}

We show  here  that when  $B$ has rank one,  SIR-PH-FA models have an associated explicit age of infection kernel, which allows in particular obtaining $R_0$ via an integral. We may say that rank one SIR-PH-FA epidemic models lie in the intersection of the  ODE/Markovian and the non-Markovian/renewal models. Alternatively, they are  precisely the renewal models with a matrix-exponential kernel. The equivalence of the two approaches in  this  simple context is  proved  concisely below; it may also be read between the lines of the wider scope papers \cite{Diek18,Diek22}.

Our attention to this subject was drawn by  formulas on  \cite[pg. 3]{Breda} for the  ``distributed delay/renewal/age of infection kernels" for  particular cases of the SIR and SEIR models. These authors  assign as an exercise to extend their formulas to other models; it turned out later that determining  which models to extend to was part of the exercise. Six years later    the exercise was first solved by  Champredon-Dushoff-Earn \cite{Champredon} for  Erlang-Seir models. We provide below  a further extension to the case of SIR-PH-FA models with $B$ of rank one -- see also \cite[Thm. 2.2]{Diek18}, \cite{Diek22} for related results.

\begin{proposition}\label{p:ren}  Let $\T {i}(t) =\vi(t) \bb$ denote the total force of infection of a  SIR-PH-FA model \eqr{SYRPH} with one susceptible class, without loss of immunity, i.e. $\g_r=0$, so that $r(t)$ does not affect the rest of the system, and with $B= \bb \va$ of rank one.
Then
\BEN
\im  The solutions of the ODE system
\eqr{SYRPH}    \saty\ also an integro-differential ``SI system"  of two scalar equations
\begin{align}\label{SI}
\bc \s'(t)=\Lambda -\pr{ \La + \g_s } \s(t)-\s(t) \T i(t)\\
\T i(t)=\vi(0) e^{-tV}\bb+
\int_0^t s(\tau) \T i(\tau)
a(t-\tau) d\tau,\ec
\end{align}

where \be{renker} a(\tau)=  \va e^{-\tau V} \bb, 
 \ee
with $-V=A
- \pr{Diag\pp{\bff{\de}+\Lambda \bff 1}}$  (it may be checked that this fits the formula on page 3 of \cite{Breda} for SEIR when $\Lambda= 0, \de=0$).\fn[3]{$a(t)$ is called ``age of infection/renewal  kernel; see  \cite{Hees,Brauer05,Breda,DiekHeesBrit,Champredon,Diek18,Diek22} for expositions of this concept.}

\im The \bzn\ $\nR$ has an integral representation
\begin{align}\label{R0I}
\nR=\int_0^\I a(\tau)  d \tau= \int_0^\I \va e^{- \tau V} \bb d\tau =\va\ V^{-1} \; \bb.
\end{align}

\EEN
\end{proposition}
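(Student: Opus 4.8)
The plan is to prove both parts directly from the factored $\vi$-equation in \eqr{SYRPH}, exploiting the rank-one hypothesis $B=\bb\va$ through a variation-of-constants (Duhamel) argument; Part 2 then follows from Part 1 by a single integration, once the relevant matrix integral is identified.

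For the first equation in \eqr{SI} I would simply set $\g_r=0$ in the $\s$-equation of \eqr{SYRPH}: the term $\g_r\mathsf r(t)$ drops out and $\s(t)\vi(t)\bb=\s(t)\T i(t)$ by the definition of $\T i$, so there is nothing left to prove. The substance is the integral equation for $\T i$, and here the key observation is that rank one collapses the matrix coupling to a \emph{scalar} forcing: writing $B=\bb\va$ gives $\vi(t)B=\pr{\vi(t)\bb}\va=\T i(t)\va$, so the $\vi$-equation becomes the linear inhomogeneous system $\vi'(t)+\vi(t)V=g(t)\va$ with scalar forcing $g(t):=\s(t)\T i(t)$. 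Multiplying on the right by the integrating factor $e^{tV}$ and recognizing the left side as $\frac{d}{dt}\pr{\vi(t)e^{tV}}=g(t)\va\, e^{tV}$, I integrate from $0$ to $t$ and rearrange to the Duhamel representation $\vi(t)=\vi(0)e^{-tV}+\int_0^t g(\tau)\va\, e^{-(t-\tau)V}\,d\tau$ (this is the promised representation \eqr{vc}). Right-multiplying by $\bb$ and using $a(t-\tau)=\va e^{-(t-\tau)V}\bb$ from \eqr{renker} turns the integrand into $\s(\tau)\T i(\tau)\,a(t-\tau)$, which is exactly the renewal equation in \eqr{SI}.

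For Part 2 I would integrate the kernel: $\nR=\int_0^\I a(\tau)\,d\tau=\va\pr{\int_0^\I e^{-\tau V}\,d\tau}\bb$, and evaluate the matrix integral as $\int_0^\I e^{-\tau V}\,d\tau=V^{-1}$. This last step is where a hypothesis is genuinely used: the identity requires $e^{-\tau V}\to 0$ as $\tau\to\I$, which holds because $-V$ is a Markovian sub-generator, equivalently $V$ a non-singular M-matrix with spectrum in the open right half-plane, as recalled after \eqr{SYRPH}. Then $\int_0^\I e^{-\tau V}\,d\tau=\pp{-V^{-1}e^{-\tau V}}_0^\I=V^{-1}$, giving $\nR=\va V^{-1}\bb$ and recovering the Arino-Brauer formula \eqr{AB}.

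I expect the only real subtlety to be bookkeeping of the order of multiplication throughout: $\vi$ is a row vector acted on from the right by $V$ and $B$, so the integrating factor must be applied on the right and the contraction with $\bb$ performed only after the Duhamel formula is in hand; the scalar factors $g(\tau)=\s(\tau)\T i(\tau)$ commute freely and may be pulled out, which is precisely what makes the convolution structure appear. The convergence of the improper integral in Part 2 is not automatic and rests entirely on the M-matrix property of $V$.
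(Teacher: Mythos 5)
Your proof is correct and, for Part 1, is essentially the paper's own argument: both apply variation of constants (Duhamel) to the factored $\vi$-equation to get $\vi(t)=\vi(0)e^{-tV}+\int_0^t \s(\tau)\vi(\tau)B e^{-(t-\tau)V}d\tau$, specialize to $B=\bb\va$, and right-multiply by $\bb$ to collapse everything onto the scalar $\T i(t)$; whether one inserts the rank-one factorization before or after Duhamel is immaterial. For Part 2 the paper's primary justification is the probabilistic ``survival method'' (integrating the semigroup started from $\va$), with the algebraic route mentioned only as an alternative; you instead carry out that algebraic route in full, evaluating $\int_0^\I e^{-\tau V}d\tau=V^{-1}$ via the M-matrix/sub-generator property of $-V$ and identifying the result with $\nR$ through the Arino--Brauer formula \eqr{AB}. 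Your version makes explicit the convergence argument (spectrum of $V$ in the open right half-plane) that the paper leaves implicit, at the cost of leaning on \eqr{AB} for the epidemiological identification of the integral with $\nR$, which the survival method supplies from first principles.
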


\Prf 1. The non-homogeneous   infectious equations may be transformed into an integral equation by applying the variation of constants formula.
The first step is the solution of the homogeneous part.  Denoting this by  $\Gamma(t)$,  \ith \begin{align}
\label{SYRIGI}
\overset{\rightarrow}\Gamma'(t) &=
- \overset{\rightarrow}\Gamma(t) V \Longrightarrow \overset{\rightarrow}\Gamma(t) =\overset{\rightarrow}\Gamma(0) e^{ t (-V)}.
\end{align}

The variation of constants formula implies then that $\vi(t)$ satisfies \ the integral equation:
\be{vc} \vi(t) = \vi(0) e^{-t V}+
\int_0^t \s(\tau) \vi(\tau)  B e^{-(t-\tau) V} d \tau.\ee

Now in the rank one case $B=\bb \va$, and
 \eqr{vc} becomes
\be{vcb} \vi(t) = \vi(0) e^{-t V}+
\int_0^t \s(\tau) \vi(\tau)  \bb \va e^{-(t-\tau) V} d \tau.
\ee

Finally, multiplying both sides on the right by $\bb$ yields the result.

2. By the ``survival method"\fn[3]{This is a first-principles method, whose rich history  is
described in   \cite{Hees,Diek10}-- see  also \cite[(2.3)]{Champredon}, \cite[(5.9)]{Diek18}.}, $\nR$
 may  be obtained by integrating $\Gamma(t)$ with $\Gamma(0)=\va$.
 A direct proof is also possible by noting that all eigenvalues of the \NGM\ except one are $0$ \cite{Arino,AABBGH}.
\QED

\beR When $\overset{\rightarrow}\Gamma(0)$ is a probability vector, \eqr{SYRIGI} has the interesting probabilistic interpretation of the survival probabilities in the various components  of the semigroup generated by the Metzler/Markovian sub-generator matrix $-V$ (which inherits this property from the phase-type generator $A$). Practically, $\overset{\rightarrow}\Gamma(t)$ will give  the expected fractions of individuals who are still in each compartment at time $t$.
\eeR

\beR We may relate \eqr{vcb} to the age of infection equation of the  distributed delay/renewal model, by noting that \ith
 \be{reneq}\vi(t) =\int_{-\I}^t s(\tau) \T i(\tau)
a(t-\tau) d\tau,\ee provided that   $ \T i(\tau)$ on the interval $(-\I,0]$ is  $\fr{k}{s_0} \de_0(\tau),$ where $\de_0(\tau)$ denotes the generalized Dirac function, and that $\vi(0)=k \va$. This second equation is related to  \cite[(2.7b),(2.8),(2.9)]{Champredon} and \cite[(1)]{Breda}.\fn[4]{In fact, these authors work   with the related \emph{incidence flux}   between the $\s$ and $\vi$ variables
${Incid}:=s\overset{\rightarrow}{i}\mathbf{b},$
 denoted by  $i(t)$ in \cite{Champredon}, and by $F(t)$ in \cite{Breda}.}
 Equations like \eqr{reneq}, called DD (distributed delay) equations appear already in the founding paper \cite{Ker}, which is quite natural. Indeed, if it were known that infections arise precisely $\tau_0$ units of time after a contact, then the second equation of the SI model would involve the Dirac kernel
$a(\tau)=\de_{\tau_0}(\tau)$. But, since the value of $\tau_0$ is never known,
it is natural to replace the  Dirac kernel by a continuous one.
\eeR

\beR \BEN \im
The fact that DD systems can be approximated by ODE systems, {by approximating the delay distribution via one of Erlang, and more generally, of matrix-exponential type}, has long been exploited in the epidemic literature, under the name of "linear chain trick" (which has roots in the Erlangization of queueing theory)-- see \fe\ \cite{Wearing2005,Feng,Wang2017,Diek18,cassidy2018recipe,Hurtado19,Ando,Diek22} for  recent contributions and further references.    The opposite direction however, i.e. the solution  of the exercise in  \cite{Breda} of identifying the kernels associated to
ODE models, seems not to have been resolved in this generality, prior to our paper.

\im  Finally, for DD models, normalizing the kernel by its integral $\mR$ yields the density of  the  ``intrinsic generating interval" for the age of infection \cite{Champredon15,Champredon,Diek22}:
 $g(t)=\fr{a(t)}{\mR}=\fr{\va e^{- \tau V} \bb}{\mR}$ --see \cite[(2.6)]{Champredon}.

\EEN
\eeR

\section{Examples}\la{s:exa} 

\subsection{The S$I^2$R/SAIR/SEIR-FA epidemic model}
We define the S$I^2$R/SAIR/SEIR-FA epidemic model \cite{Van08,RobSti,Ansumali,Ott,AAH} by:
\be{SAIR} 
\Scale[0.9]{\bc
\s'(t)= \Lambda  -\s(t)\pr{\beta_ 2  i_2(t)+\beta_ 1  i_1(t)+\g_s+\Lambda} + \g_r r(t)+\de\s(t)  i_2(t)\\
\bep i_1'(t)&i_2'(t)\eep = \bep i_1(t)&i_2(t)\eep
\pp{\s(t) \bep  \beta_1     &  0 \\
 {\beta_ 2}&  0 \eep+ \bep    -(\g_1   +\Lambda)  &  \g_{1,2} \\
 0&  -\pr{\g_2 +\Lambda +\de}\eep}, \g_{1}=\g_{1,2}+\g_{1,r}
\\
r'(t)=  \g_s \s(t)+ \g_{1,r}   i_1(t)+ \g_2     i_2(t)- (\g_r+\Lambda) r(t). \ec}
\ee
\beR \BEN \im  The classic SEIR model is obtained when $\g_{1,r}=0=\de$. This model maybe viewed as an ``Erlangization" of the SIR model, in the sense described by the following definition.
\beD a. A (generalized) Erlang row is a matrix row which has  one negative element on the main diagonal, followed by its opposite to the right, and in which  all the other elements are zero.

b. A square matrix obtained by adding  Erlang rows  above a square matrix A, while extending the columns of A by 0's, will be called an Erlangization of A.

c. When the extension above involves rows with one negative element on the main diagonal, which is followed to the right by a positive element which is smaller in absolute value, and in which  all the other elements are zero, will be called Coxianization. For example, the SI$^2$R model is a
Coxianization of the SIR model.
\eeD
\im Erlangization and Coxianization are particular cases of the generalized
linear chain trick -- see \cite{Hurtado19}  and the full version of this article. Probabilistically, they amount to preceding a compartment I by another compartment E, such that E may  transit either to I or outside the infectious/Markovian classes.

\EEN
\eeR

This process has been called in previous papers under several names. Besides SEIR, used usually when  $ \beta_1=0$, but also with  $ \beta_1>0$ -- see   \cite{Van08}, we have also SITR~\cite{YangBrauer} (when $b=\delta=\g_r=\g_s=0$).

The system \eqr{SAIR} contains nine parameters, three of which $\de,\g_r$ and $\g_s$ do not change much the essence of the problem, and   are often omitted.
It is an \AB\ with  parameters $\va =\bep 1&0\eep, A=\bep -\g_1  &\g_{1,2}\\ 0&-\g_2   \eep, \bff a= (-A)\bff 1
 =\bep \g_{1,r} \\\g_2   \eep$ and
 \[ \bb=\bep \beta_ 1 \\ \beta_ 2\eep, \; \mbox{so}\; B=\bep \beta_ 1 & 0\\ \beta_ 2 & 0 \eep , \bff \de=\bep 0\\\de\eep, V=\bep    \g_1   +\Lambda  &  -\g_{1,2} \\
 0&  {\g_2 +\La +\de}\eep.\]
The Laplace transform of the age of infection kernel  is:
 \begin{align} \la{aTr} \Hat{a}(s)&= \va (s I+V)^{-1} \bb
 =
 \beta_ 1\frac{1 }{ \left(b+\g_1  +s\right)}+
 \beta_ 2 \frac{\g_{1,2}}{(b+\g_2  +\delta+s ) \left(b+\g_1  +s\right)},
 \end{align}
 and the  Arino \& al. formula yields
$
\nR= \int_0^{\I} a(\tau) d \tau= \frac{\beta_ 1 (b+\g_2  +\delta )+\g_{1,2} \beta_ 2}{(b+\g_2  +\delta ) \left(b+\g_1  \right)}.$
 \beR {\bf Probabilistic interpretations of the matrix exponential}. \eqr{aTr} confirms that the coefficients of $\beta_ i$ in the delay kernel $a(t)$  are the components $\vec \Gamma_{1}(t),\vec \Gamma_{2}(t)$ of the semigroup starting from the first state $\vec \Gamma(t)=(1,0) e^{-t V}$, namely an exponential with parameter $b+\g_1$, corresponding to surviving in the exposed/asymptomatic period, and a hypoexponential survival function
corresponding to surviving in the infectious state.
\eeR

\figu{SAIRS}{ Chart flow of the SAIR model \eqr{SAIR}. The red edge corresponds
to the entrance of susceptibles into the disease classes, the brown edges are the rate of the transition matrix V, and the cyan
dashed lines correspond to the rate of loss of immunity.  The remaining black lines correspond to the
inputs and outputs of the birth and natural death rates, respectively, which are equal in this case.}{1}

\subsection{A generealized SLAIR epidemic model}
The SLAIR epidemic model \cite{YangBrauer,arino2020simple,AAV} is defined by:
\be{SLIARG}
\Scale[0.9]{ \bc
\s'(t)=   \Lambda -\s(t)\pr{\beta_ 2  i_2(t)+\beta_3 i_3(t)+\Lambda}\\
\bep i_1'(t)&i_2'(t)& i_3'(t)\eep = \bep i_1(t)&i_2(t)& i_3(t)\eep
\pp{\s(t) \bep  0    &  0& 0 \\
 {\beta_ 2}&  0 &0\\
 \beta_ 3&0&0\eep+ \left(
\begin{array}{ccc}
 -\g_1 -\Lambda & \g_{1,2} & \g_{1,3} \\
 0 & -\g_2 -\Lambda  & \g_{2,3} \\
 0 & 0 & -\g_3 -\Lambda \\
\end{array}
\right)}
\\
r'(t)= \g_{2,r}  i_2(t)+\g_3  i_3(t) -\Lambda r(t)\ec.}
\ee
This is an  \AB\ with  parameters
 \bea
 \Scale[0.9]{ \va =\bep 1&0&0\eep, A= \left(
\begin{array}{ccc}
 -\g_1  & \g_{1,2} & \g_{1,3} \\
 0 & -\g_2   & \g_{2,3}\\
 0 & 0 & -\g_3   \\
\end{array}
\right), \bff a= (-A) \bff 1
 =\bep 0\\ \g_{2,r}  \\\g_3  \eep, \bb=\bep 0\\ \beta_2 \\ \beta_ 3\eep, \;   \mbox{so}\; B=\left(
\begin{array}{ccc}
 0 & 0 & 0 \\
 \beta _2 & 0 & 0 \\
 \beta _3 & 0 & 0 \\
\end{array}
\right).}
\eea
The Laplace transform of the age of infection kernel  is:
 \bea
\Scale[1.1]{ \Hat{a}(s)=\beta _2 \frac{ \gamma _{\text{1,2}}}{\left(b+\gamma _1+s\right) \left(b+\gamma _2+s\right)}+ \beta_ 3 \pr{\frac{\g_{1,3}}{\left(b+\gamma _1+s\right) \left(b+\gamma _3+s\right)}+\frac{\gamma _{\text{1,2}} \gamma _{\text{2,3}}}{\left(b+\gamma _1+s\right) \left(b+\gamma _2+s\right) \left(b+\gamma _3+s\right)}},}\eea
and the Arino \& al. formula yields
$
\nR= \frac{\beta _3 \gamma _{\text{1,2}} \gamma _{\text{2,3}}+b \beta _2 \gamma _{\text{1,2}}+\beta _2 \gamma _3 \gamma _{\text{1,2}}+b \beta _3 \gamma _{\text{1,3}}+\beta _3 \gamma _2 \gamma _{\text{1,3}}}{\left(b+\gamma _1\right) \left(b+\gamma _2\right) \left(b+\gamma _3\right)}.$
\figu{SLIAR2}{ Chart flow of the SLAIR model \eqr{SLIARG}.
}{1} 

\section{Extension to several susceptible \com s and time-dependent inputs} \la{s:ext}
{We show here that SIR-PH epidemic models with two or more susceptible \com s  may also \saty\  renewal type integro differential equations}.

Consider  the ODE model with $n$ infectious classes $\vi=(i_1,i_2, ...,i_n),$   $2$ susceptible classes with arrivals $\La_1(t), \La_1(t)$, and total arrivals $\La(t)=\La_1(t)+\La_2(t),$ defined by:
\begin{align}
\label{SIR2}
\s_1'(t) &=\La_1(t) + \g_{r,1} {\mathsf r}(t) - \s_1(t) \pp{\vi(t) {\bb_1} + \La(t) + \g_s }, \; \; \bb_1=B_1 \bff 1,
 \nonumber\\
 \s_2'(t) &= \La_2(t)+ \g_{r,2} {\mathsf r}(t)- \s_2(t) \pp{\vi(t){\bb_2}+  \La(t) + \g_s }, \; \; \bb_2=B_2 \bff 1,
 \nonumber\\
\vi'(t) &=  \vi (t) \pp{\s_1(t)  \; B_1+\s_2(t)  \; B_2+ A
- Diag\pr{ \La(t) \bff 1+\bff{\de}}}:=
 \vi (t) \;  F(t)- \vi (t) V, \no 
 \\
{\mathsf r}'(t) &=  \vi (t) \bff a+ (\s_1(t)+\s_2(t))  \g_s   - (\g_r+\La(t)){\mathsf r}(t), \; \bff a=(-A) \bff 1,
\end{align}
where we put $ F(t)=\sum_i \s_i(t)  B_i,  \bff \de=\bep\de_1\\\de_2\\ \vdots\eep, \g_{r}=\g_{r,1}+\g_{r,2}$. Note that
\[N(t)=\s_1 (t)+\s_2 (t)+{\mathsf r}(t)+ \vi(t) \bff 1 \text{ \sats }\; N'(t)= \La(t) (1-N(t)) - \vi(t) \bff \de.\]

Assume further that $B_i=\bb_i \va_i, i=1,2$,
 note the factorization \be{facB} F(t)
 ={\bb} \bep \s_1(t)&0\\0&\s_2(t)\eep {\va}, \text{  where }\bb=\bep \bb_1,&\bb_2 \eep, {\va} =\bep  \va_1\\ \va_2\eep\ee
 are $n\times 2$ and $2\times n$ matrices, \resp.

 The variation of constants formula applied to \eqr{SIR2} implies  that $\vi(t)$ satisfies \ the integral equation:
\beq \la{vc2}\no  \vi(t) &=&\vi(0) e^{-t V}+
\int_0^t  \vi(\tau)F(\tau) e^{-(t-\tau) V} d \tau =\vi(0) e^{-t V}+
\int_0^t  \vi(t-\tau)F(t-\tau) e^{-\tau V} d \tau\\:&=&\vi(0) e^{-t V}+
\int_0^t  \vi(t-\tau)\T K(t,\tau)  d \tau, \quad \T K(t,\tau):=F(t-\tau) e^{-\tau V}.\eeq
We will call $\T K(t,\tau)$ implicit kernel, to emphasize the fact that it depends still on the unknown $\s_i(t)$.

When $B_i=\bb_i \va_i, i=1,2$, putting $\T {i}_i(t)=\vi(t) \bb_i, i=1,2$
 \eqr{vc2} becomes
\be{vcb2} \vi(t) = \vi(0) e^{-t V}+
\int_0^t \pp{\sum_{i=1}^2\s_i(\tau) \T {i}_i(\tau) \va_i e^{-(t-\tau) V}} d \tau.
\ee

Multiplying further by $\bb_k, k=1,2$ and putting
\be{a}a_{i,j}(t)=  \va_i e^{-t V} \bb_j, i,j=1,2,\ee yields a system of two equations for $\T {i}_k(t)=\vi(t) \bb_k, k=1,2$:
\be{ksys}\bc \T {i}_1(t)=\vi(0) e^{-t V} \bb_1+
\int_0^t \pp{\s_1(\tau) \T {i}_1(\tau) a_{1,1}(t-\tau)+
\s_2(\tau) \T {i}_2(\tau) a_{2,1}(t-\tau)}  d \tau\\
\T {i}_2(t)=\vi(0) e^{-t V} \bb_2+
\int_0^t \pp{\s_1(\tau) \T {i}_1(\tau) a_{1,2}(t-\tau)+
\s_2(\tau) \T {i}_2(\tau) a_{2,2}(t-\tau)} d \tau\ec\ee
We may conclude that proposition \ref{p:ren} extends as follows:
\begin{proposition} Consider a SIR-PH-FA model \eqr{SIR2} with two susceptible classes  with $B_i= \bb_i \va_i, i=1,2$,  with constant input inflows $\Lambda_1,\Lambda_2$, and which satisfies the conditions of \cite{Van}. Then, \ith:
\BEN \im The solutions of \eqr{SIR2}
    \saty\ also an integro-differential  system
\begin{equation} \la{kerMat}
\bc \s_1'(t) =\La_1 + \g_{r,1} {\mathsf r}(t) - \s_1(t) \pp{\vi(t) {\bb_1} + \La + \g_s }, \; \; \bb_1=B_1 \bff 1,\\
 \s_2'(t) = \La_2+ \g_{r,2} {\mathsf r}(t)- \s_2(t) \pp{\vi(t){\bb_2}+  \La + \g_s }, \; \; \bb_2=B_2 \bff 1,\\
{\mathsf r}'(t) =  \vi (t) \bff a+ (\s_1(t)+\s_2(t))  \g_s   - (\g_r+\La){\mathsf r}(t), \; \bff a=(-A) \bff 1\\
\T i(t)=\vi(0) e^{-tV} \bb+
\int_0^t \T i(\tau)
Diag(\s(\tau))  a(t-\tau)  d\tau,\ec
\end{equation}
where $  a(\tau)=  \bep \s_i(\tau) a_{i,j}(\tau)\eep_{i,j=1,2}, \s(\tau)=(\s_1(\tau),\s_2(\tau))$.
\im There is a unique  DFE,  given by
\[\sd^{i}=\fr{\La+ \g_s}{\La_i+\g_{r,i} \rd}, i=1,2, \; \rd=\fr {\g_s}{\La+ \g_r+\g_s}=\fr {\g_s}{\sum_i \La_i+ \sum_i \g_{r,i}+\g_s}.\]

\im  {If $ \bb_1,\bb_2 $ are independent}, then the  \brn\ $\mR$ is the \PF\ eigenvalue of the two by two matrix
\be{RMat} M_D:=\bep \sd^{1} \va_1 V^{-1} \bb_1& \sd^{1} \va_1 V^{-1} \bb_2\\\sd^{2} \va_2 V^{-1} \bb_1&\sd^{2} \va_2 V^{-1} \bb_2\eep={\va}\bep \sd^{1}&0\\0&\sd^{2}\eep V^{-1}\bb,
\ee
with an obvious  generalization to the case of several
\com s.
\EEN
\end{proposition}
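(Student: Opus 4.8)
The plan is to dispatch the three claims in order, the first two being bookkeeping on top of the variation-of-constants computation \eqr{vc2}--\eqr{ksys} already in hand, and the third carrying the real content. For (1) I would simply assemble the preceding derivation: inserting the rank-one factorizations $B_i=\bb_i\va_i$ into \eqr{vc2} yields \eqr{vcb2}, and right-multiplying by $\bb_1$ and $\bb_2$ while using the definition \eqr{a} of $a_{i,j}$ produces the coupled system \eqr{ksys}. Collecting $\T i_1,\T i_2$ into the row vector $\T i$ and the scalars $a_{i,j}$ into a $2\times2$ matrix recovers the last line of \eqr{kerMat}, while the first three lines are \eqr{SIR2} with the inflows frozen at the constants $\La_1,\La_2$; nothing here exceeds rearrangement.

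For (2) I would impose $\vi\equiv0$ in \eqr{SIR2}. The susceptible equations collapse to the balance relations $\sd^i(\La+\g_s)=\La_i+\g_{r,i}\rd$, which fix each $\sd^i$ once $\rd$ is known; summing them (with $\La=\La_1+\La_2$ and $\g_r=\g_{r,1}+\g_{r,2}$) gives $\sd^1+\sd^2=(\La+\g_r\rd)/(\La+\g_s)$. Substituting into the recovered balance $(\sd^1+\sd^2)\g_s=(\g_r+\La)\rd$ and simplifying leaves the single linear equation $\g_s\La=\La(\La+\g_r+\g_s)\rd$, hence $\rd=\g_s/(\La+\g_r+\g_s)$ as claimed; uniqueness is automatic because $(\sd^1,\sd^2,\rd)$ solve a linear system of full rank.

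The crux is (3). First I would identify the next generation matrix via the van den Driessche--Watmough recipe, legitimate under the assumed hypotheses of \cite{Van}: linearizing the infectious block $\vi'=\vi(F(t)-V)$ at the DFE splits it into the new-infection part $F_{dfe}=\sd^1B_1+\sd^2B_2$ and the transition part $V$, so that (after transposing to the column convention, which preserves every eigenvalue) $\mR=\rho(F_{dfe}V^{-1})$. Next I would invoke the factorization \eqr{facB} at the DFE, $F_{dfe}=\bb\,D\,\va$ with $D=\diag(\sd^1,\sd^2)$, $\bb=\bep\bb_1&\bb_2\eep$ of size $n\times2$ and $\va=\bep\va_1\\\va_2\eep$ of size $2\times n$, so that $F_{dfe}V^{-1}=\bb\,(D\va V^{-1})$ is a product of an $n\times2$ by a $2\times n$ matrix. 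The key step is the elementary spectral-reversal fact that for $P$ of size $n\times2$ and $Q$ of size $2\times n$ the matrices $PQ$ and $QP$ share the same nonzero eigenvalues: taking $P=\bb$ and $Q=D\va V^{-1}$ gives $QP=D(\va V^{-1}\bb)$, whose $(i,j)$ entry is exactly $\sd^i\va_iV^{-1}\bb_j$, i.e.\ the matrix $M_D$ of \eqr{RMat}. Hence $\mR=\rho(F_{dfe}V^{-1})=\rho(M_D)$. Independence of $\bb_1,\bb_2$ is what guarantees that $\bb$ has full column rank, so $M_D$ faithfully carries the entire nonzero spectrum of the NGM, in particular its Perron root; the passage to $m$ susceptible classes is verbatim, with $\bb$ of size $n\times m$, $\va$ of size $m\times n$, and $M_D=\diag(\sd^1,\dots,\sd^m)\,\va V^{-1}\bb$.

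I expect the only delicate point — and the main place an error could creep in — to be the conventions: the row-vector form of the $\vi$-equation forces a transpose when matching the column-vector van den Driessche--Watmough setup, and one must observe that $D(\va V^{-1}\bb)$ and $(\va V^{-1}\bb)D$ are conjugate through the invertible $D$, so it is immaterial whether the $2\times2$ reduction scales rows or columns by the $\sd^i$. Both checks are routine, since the spectral radius is invariant under transposition and conjugation. A cleaner equivalent route to the same $M_D$ is to integrate the renewal-kernel matrix of part (1): by the computation in Proposition~\ref{p:ren}(2) one has $\int_0^\I a_{i,j}(\tau)\,d\tau=\va_iV^{-1}\bb_j$, so that $\va V^{-1}\bb$ is the integrated multi-type kernel and $M_D$ is its susceptibility-weighted form, whose Perron root is again $\mR$.
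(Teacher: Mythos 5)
Your proposal is correct, and for parts (1)--(2) it matches the paper, whose proof of these items is literally ``this holds by \eqr{ksys}'' and ``this is an elementary computation'': your bookkeeping is that computation spelled out. (A useful by-product: the balance relation your derivation forces, $\sd^i(\La+\g_s)=\La_i+\g_{r,i}\rd$, shows that the formula for $\sd^i$ displayed in the proposition has its fraction inverted --- a typo which your computation silently corrects; the formula for $\rd$ is unaffected.) Part (3) is where you take a genuinely different route. The paper extends the rank-one argument of Remark~\ref{r:PF}: any eigenvector of the \NGM\ $(\sd^1 B_1+\sd^2 B_2)V^{-1}$ for a nonzero eigenvalue must lie in the span of $\bb_1,\bb_2$, and substituting $v=\kappa_1\bb_1+\kappa_2\bb_2$ collapses the eigenvalue problem to $\bep \bb_1&\bb_2\eep(M_D-\mR I_2)\kappa=0$, whence $\det(M_D-\mR I_2)=0$; the independence of $\bb_1,\bb_2$ is what makes this reduction two-sided. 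You instead factor the NGM as $PQ$ with $P=\bb$ ($n\times 2$) and $Q=\diag(\sd^1,\sd^2)\,\va V^{-1}$ ($2\times n$) and invoke the Weinstein--Aronszajn identity that $PQ$ and $QP=M_D$ share their nonzero spectrum with multiplicity. Both arguments are sound, but they buy different things. Yours is shorter and strictly more general: the $PQ$/$QP$ identity requires no rank hypothesis, so the spectral radius of the NGM equals $\rho(M_D)$ even when $\bb_1,\bb_2$ are dependent; your sentence saying that independence ``guarantees that $M_D$ faithfully carries the nonzero spectrum'' actually undersells your own method, since the identity gives this unconditionally --- the independence hypothesis is needed only by the paper's eigenvector argument, to ensure that $\kappa\neq 0$ implies $\bb\kappa\neq 0$. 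What the paper's route buys in exchange is the explicit eigenvector structure (eigenvectors of the NGM of the form $\bb\kappa$ with $M_D\kappa=\mR\kappa$), which a purely spectral identity does not exhibit. You are also more careful than the paper on the row-versus-column convention: matching the van den Driessche--Watmough setup of \cite{Van} from the row equation $\vi'=\vi(F_{dfe}-V)$ does require the transpose-plus-similarity step you flag (since $F_{dfe}^{T}(V^{T})^{-1}=(V^{-1}F_{dfe})^{T}$, which is then similar to $F_{dfe}V^{-1}$ via conjugation by $V$), a point the paper passes over silently; and your closing observation that $M_D$ is the susceptibility-weighted integral of the kernel matrix $\pr{a_{i,j}}$ of part (1) ties the result back to \eqr{R0I} in the one-class case.
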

\begin{proof} 1. This holds by  \eqr{ksys}.

2. This is an elementary computation.

3. One may check that the conditions of \cite{Van} hold, and thus the stability of the unique DFE is determined by the spectral radius of the \NGM
\[(\sd^{1} B_1 + \sd^{2} B_2) V^{-1},\] which coincides with the \PF\ eigenvalue.

 Note first  that when $B_i= \bb_i \va_i$, then  $B:=\sd^{1} B_1 + \sd^{2} B_2$ has rank $2$, and  so does the \NGM. Therefore, all its eigenvalues except at most two are $0$.

We show now that both the remaining eigenvalues satisfy also a $2 \times 2$ eigenvalue problem, extending the proof sketched in Remark 1.2.  We note first that  the corresponding eigenvectors to the right of the \NGM must be of the form $v=\kappa_1\bb_1+\kappa_2 \bb_2$. Plugging now this form yields the homogeneous $n \times 2$ system  for $\kappa_i, i=1, 2$:
\[\pp{(\sd^{1} \bb_1 \va_1 + \sd^{2} \bb_2 \va_2) V^{-1} -\mR I_n}\bep \bb_1,&\bb_2 \eep  \bep \kappa_1\\\kappa_2 \eep=0\]
 For non-zero solutions, the $n \times 2$ matrix multiplying $\bep \kappa_1\\\kappa_2 \eep$ must have rank one.

Putting further $a_{ij}= \va_i V^{-1} \bb_j$,  we may rewrite  the matrix
as  \[\bep \sd^{1} \bb_1 a_{11} + \sd^{2} \bb_2 a_{21},&\sd^{1} \bb_1 a_{12} + \sd^{2} \bb_2 a_{22} \eep -\mR \bep \bb_1,&\bb_2 \eep=\bep \bb_1,&\bb_2 \eep(M_D-\mR I_2).\]
Now if $\det(M_D-\mR I_2)=0,$ i.e. if \eqr{RMat} holds, then this matrix has rank one, and if $ \bb_1,\bb_2 $ are independent,  then this condition is also necessary. Finally, we note that when the matrix $M_D$ has two positive eigenvalues, then $\mR$ will be their maximum.\end{proof}

One example where this formula applies is the vector–host model in \cite[Sec. 4.5]{Van}.
\beR   The formula for the \brn, in  examples with two susceptible classes, involves sometimes square roots -- see for example \cite{heffernan2005perspectives,heesterbeek2007type}, \cite[(7,9)]{failure}, \cite[Sec. 6.1]{Van17}. 
 Other times it is $\max[\nR_1,\nR_2]$ where $\nR_1,\nR_2$ involve only one susceptible class, and are obtained by the rank one \brn\ formulas (this is sometimes referred to as competitive exclusion).
Our formula  \eqr{RMat} suggests that the difference between the two situations comes from the reduced \NGM\ (or the Diekmann kernel) being triangular or not.

\eeR

\section{Extending rank one ODE epidemic models via the generalized linear chain trick (GLCT) formalism, and the associated renewal models}\la{s:GLCT}
The rank one models discussed in this paper arise often via the so-called
generalized linear chain trick \cite{Hurtado19}. For example, the SIR-PH with one susceptible class arises by splitting the I individuals into several sub\com s, with in-between transitions governed by an $(\va,A)$
phase type distribution. This model may be viewed also as a SIR renewal model,  for which the time spent in the  infectious class is of type
$(\va,A)$. The  GLCT  formalism (GLCTF) is best explained via the example of the SEIR model.

\subsection{The SEIR-PH GLCT model} \label{sec:SEIR}
We recall here a   probabilistically interpretable SEIR-PH model, due to \cite{Hurtado19,Hurtado21}, and compute its Diekmann kernel and \bzn\ $\nR$.
Recall the classic SEIR model\begin{subequations}  \label{eq:SEIR}  \begin{align}
	\frac{dS}{dt} =& \; -\beta\,S\,I \label{eq:SEIRa} \\
	\frac{dE}{dt} =& \; \beta\,S\,I - r_E\,E \label{eq:SEIRb}\\
	\frac{dI}{dt} =& \; r_E\,E - r_I\,I \label{eq:SEIRc}\\
	\frac{dR}{dt} =& \; r_I\,I, \label{eq:SEIRd}
	\end{align}
\end{subequations}
where we have assumed  no demography, which renders  the computation of the semigroup simpler.
Assume now the latent period distribution is phase-type with parameters $\va_\mathbf{E}$ and $\mathbf{A_E}$, and the infectious period distribution is also phase-type, but with parameters $\va_\mathbf{I}$ and $\mathbf{A_I}$. Let $\vx=[E_1,\ldots]$ and $\vy=[I_1,\ldots]$ be the row vectors of the fraction of individuals in each of the exposed and infectious sub-states, respectively, and put $E=\sum E_j$ and
$I=\sum I_i$. An associated ODE model, which may be obtained either by introducing auxiliary unknowns $\vx,\vy$ into the SEIR model \eqr{eq:SEIR}, or directly by GLCTF, is: \begin{subequations} \label{eq:SEIRPT} \begin{align}
\frac{dS}{dt} =& \; -\beta\,S\,I  \\
\frac{d\vx}{dt} =& \; \beta\,S\,I\,\va_\mathbf{E} + \vx  {\mathbf{A_E}}\\
\frac{d\vy}{dt} =& \; \underbrace{\big(\vx \bff a_E \big)}_\text{destruction scalars} \cdotp \va_\mathbf{I} + \vy {\mathbf{A_I}}, \qu \bff a_E=-{\mathbf{A_E}\mathbf{1}} \\
\frac{dR}{dt} =& \; \overbrace{\vy \bff a_I}, \qu \bff a_I={-\mathbf{A_I}\mathbf{1}}.
\end{align} \end{subequations}
\beR The first fact to note in \eqr{eq:SEIRPT} is that the scalar transfer rates $r_E,r_I$ have been replaced by transfer matrices $\mathbf{A_E}, \mathbf{A_I}$, in their ``origine equations" \eqr{eq:SEIRb}, \eqr{eq:SEIRc}, \resp. The second  key fact to note that in the ``destination equations"
\eqr{eq:SEIRc}, \eqr{eq:SEIRd}, $r_E,r_I$ have been replaced by the product of ``destruction scalars" (ending with $\bff a_E, \bff a_I$) by  ``rebirth vectors" ($\va_\mathbf{I}$ for the first, and $1$ for the second). This is the matrix expression of the usual balance between out-flows and in-flows, which takes into account the different dimensionality of the origines and destinations. \eeR

This is an  $(A,B)$ Arino-Brauer epidemic model with one susceptible class and  parameters
 \begin{align*}
 -V=\mathbf{A}= \bep
 {\mathbf{A_E}} & \bff a_E \va_I \\
 0 & {\mathbf{A_I}}
\eep, \va =\bep \va_E&0\eep,
  \bb=\bep 0 \\ \beta \bff 1\eep.
\end{align*}
\beR Researchers familiar with the interpretations of $\bff a_E$ and $ \va_I$ as death rates in E states and birth rates in I states will note the intuition behind the formulas, as well as the fact that by grouping together E, I, in an I class, the SEIR-PH may be viewed also as a SIR-PH. Finally, note that the generalization to heterogeneous infectivity rates $\bb =\bep \beta_ 1\\\beta_ 2\\\vdots\eep$ is immediate.
\eeR
By the \cite{Arino} formula, the  \brn\ is
\be{RSE}
\mR= \bep \va_E&0\eep \bep
 -{\mathbf{A_E}}^{-1} & {\mathbf{A_E}}^{-1} \bff a_E \va_I {\mathbf{A_I}}^{-1} \\
 0 & -{\mathbf{A_I}}^{-1}
\eep \bep 0 \\ \beta \bff 1\eep=-\beta \va_I {\mathbf{A_I}}^{-1} \bff 1=\beta \E[\tau_E +\tau_I],\ee
where $\tau_E, \tau_I$ denote the distributions of the total times spent in exposed and infectious classes, \resp, and $\E$ denotes mathematical expectation.  The semigroup is quasi-explicit, given by
\be{seHu} e^{t \mathbf{A}}=\bep e^{t\mathbf{A_E}}&ILT \pp{(s Id-\mathbf{A_E})^{-1}\bff a_E \va_I (s Id-\mathbf{A_I})^{-1}} \\
\mathbf{0}&e^{t\mathbf{A_I}},\eep
\ee
where $ILT$ denotes inverse Laplace transform.
\begin{example}
For example, suppose
\[\mathbf{A_E}=\bep -\eta_E &\eta_E\\0&-r_E\eep, \bff a_E= \bep 0\\r_E\eep, \mathbf{A_I}=\bep -\eta_I &\eta_I\\0&-r_I\eep, \bff a_I= \bep 0\\r_I\eep, \va_I=(1,0).\]

Then, $\mathbf{A}=\bep -\eta_E &\eta_E&0&0\\0&-r_E&r_E&0\\0&0&-\eta_I &\eta_I\\0&0&0&-r_I\eep$, reflecting the Erlangization of the $E$ and $I$ classes.

The semigroup $e^{t A}$ is explicit, for example the LT transform in its NE corner is
\[\fr 1 {\eta _i+s}\frac{r_e}{r_e+s }
\left(
\begin{array}{cc}
 \frac{\eta _e }{ \eta _e+s}
 & \frac{\eta _e }{ \eta _e+s} \frac{ \eta _i}{  r_i+s } \\
  1& \frac{ \eta _i}{ r_i+s } \\
\end{array}
\right).\]
\end{example}

\subsection{A seven compartments epidemic model of Covid-19 inspired by  \cite{torres20,domo22}}\la{s:cov}

The following seven compartments epidemic model of Covid-19 includes besides the omnipresent $\s,i$, also additional   super-spreaders,  hospitalized, recovery and  fatality classes denoted by $p,h,r,f$, respectively. It is given by:

\be{covid} \bc
\s'(t)=   -\beta_ i \s(t) i(t) -\beta_ h \s(t) \h(t)- \beta_ p \s(t) \p(t)\\
\bep \e'(t)&i'(t)&\p'(t)&\h'(t)\eep = \bep \e(t)&i(t)&\p(t)&\h(t)\eep
\Biggl[\s(t) \left(
\begin{array}{cccc}
 0 & 0 & 0 & 0 \\
 \beta _i & 0 & 0 & 0 \\
 \beta _p & 0 & 0 & 0 \\
 \beta _h & 0 & 0 & 0 \\
\end{array}
\right)+\\
  \left(
\begin{array}{cccc}
 -\gamma_e-e_r & e_i & e_p & 0 \\
 0 & -i_h-i_r-\delta _i & 0 & i_h \\
 0 & 0 & -p_h-i_r-\delta _p & p_h \\
 0 & 0 & 0 & -\g_h-\delta _h \\
\end{array}
\right)\Biggr]
\\
\bep r'(t)& \f'(t)\eep= \bep \e(t)&i(t)&\p(t)&\h(t)\eep \left(
\begin{array}{cccc}
 e_r & i_r & p_r & \g_h \\
 0 & \delta _i & \delta _p & \delta _h \\
\end{array}
\right)^t \ec.
\ee

\beR We have removed from the original model a class called asymptomatic, because that class was not allowed to produce new infections \cite[Fig 1]{torres20}, as it should have under the usual definition of asymptomatics -- see for example \cite{Ansumali},  and also did not allow recovery, which was probably a typo.
\eeR

This is an  \AB\ with  parameters
 \begin{align*}
 V=  \left(
\begin{array}{cccc}
 \gamma_e & -e_i & -e_p & 0 \\
 0 & i_h+i_r+\delta _i & 0 & -i_h \\
 0 & 0 & \p_h+p_r+\delta _p & -\p_h \\
 0 & 0 & 0 & \g_h+\delta _h \\
\end{array}
\right), \va =\bep 1&0&0&0\eep,
  \bb=\bep 0 \\\beta_ i\\\beta_ p\\ \beta_h\eep.
\end{align*}

The Arino \& al. formula yields
$
\nR=e_p \frac{\beta _h p_h+\beta _p \left(\delta _h+\g_h\right)}{\gamma _e  \left(p_h+\delta _p+p_r\right) \left(\delta _h+\g_h\right)}+ e_i \frac{\beta _h i_h+\beta _i \left(\delta _h+
\g_h\right)}{\gamma _e \left(i_h+\delta _i+i_r\right) \left(\delta _h+\g_h\right)}.$

\figu{covid}{ Chart flow of the Covid-19 model \eqr{covid}. }{0.9}

\beR \cite{domo22} argue that the transitions from $i$ and $\p$ to $\h$ should be modeled via DD equations, obtained by replacing $i,\p$ by convolutions with exponential densities, and obtain associated ODE's with two extra unknowns.  We note here that their system loses the conservation
of mass of the \cite{torres20} system, but that this  may be maintained by applying GLCT, as follows.
\BEN \im
Start by introducing new auxiliary variables \[\T i(t)=\int_0^t  \eta_i e^{- \eta_i(t-\tau)}\,i(\tau) d\tau, \; \T \p(t)=\int_0^t  \eta_p e^{- \eta_p(t-\tau)}\,\p(\tau) d\tau, \]
(note that these converge to $e(t),p(t)$ when $\eta_p, \eta_i\to\I$).
\im Replace $i,\p$ in the equation for $\h$ by $\T i,\T \p$.
\im Add the differential equations
$\frac{d\T i}{dt} =\; \eta_i\,i - \eta_i \,\T i \;\;
\frac{d\T \p}{dt} =\eta_p\,\p - \eta_p \,\T \p  $
to the ODE system.

\EEN
Finally, we arrive, for the 6 extended disease equations, to:
\begin{align*}
\bc
\e'(t)=\beta_i \s(t) i(t)+\beta_p \s(t) \p(t) +\beta_h \s(t) \h(t)-\g_e\e(t),\\
i'(t)= e_i \e(t)-(i_h+i_r+\delta_i)i(t),\\
\T i'(t)=\eta_i i(t) -\eta_i \T i(t),\\
\p'(t)=e_p \e(t)-(p_h+p_r+\delta_p) \p(t),\\
\T \p'(t)=\eta_p \p(t)-\eta_p \T \p(t),\\
\h'(t)=i_h \T i(t)+p_h \T \p(t)-(\g_h+\delta_h)\h(t)
\ec
\end{align*}

This is an \AB\ with disease variables

$\bep \e(t)&i(t)&\T i(t)&\p(t)&\T \p(t)&\h(t)\eep$ and  parameters $\va =\bep 1&0&0&0&0&0\eep,$
 \begin{align*}
 V=  \left(
\begin{array}{cccccc}
 \gamma_e & -e_i & 0& -e_p & 0 & 0 \\
 0 & i_h+i_r+\delta _i & -\eta_i &0&0&0 \\
  0 & 0 & \eta_i&0&0 & -i_h \\
 0 & 0 & 0&p_h+p_r+\delta _p & -\eta_p &0 \\
 0 & 0 & 0&0 & \eta_p &-p_h \\
 0 & 0 & 0 &0&0& \g_h+\delta _h \\
\end{array}
\right),
  \bb=\bep 0 \\ \beta_ i\\ 0\\\beta_ p\\ 0\\ \beta_h\eep.
\end{align*}

We must assume that $i_h <\eta_i, p_h <\eta_p$, so that $A$ is a sub-generator. It follows than by the formula of Arino $\&$ al. that: \[
\nR=e_p \frac{\beta _h p_h+\beta _p \left(\delta _h+\g_h\right)}{\gamma _e \left(\delta _h+\g_h\right) \left(\delta _p+p_h+p_r\right)}+ e_i \frac{\beta _h i_h+\beta _i \left(\delta _h+\g_h\right)}{\gamma _e \left(\delta _h+\g_h\right) \left(\delta _i+ i_h+i_r\right)},\]
which is precisely the same $\nR$ as before the extension.
\eeR
\beR There does not seem to be  much practical estimation work
of phase-type distributions fitting real epidemic data -- see though \cite{Hurtado19,kim2019global}. \eeR
In the following subsection we present one example for which we plan to undertake  such work in the future.

\section{Conclusions and further work} \la{s:conc}
Solving the exercise of \cite{Breda} revealed that the  \AB\ with $B$ of rank one have the remarkable property of having a natural associated ``age of infection kernel" which implies a very simple formula for
$\nR$.\fn[4]{Thus, these deterministic models have also  one foot in the stochastic  world, which reveals itself when all the infectious equations are grouped into one equation.} This continues to be true for models with several susceptible classes, and is
 of considerable interest for epidemic models structured by the age of the individuals.

{We  prove now that \eqr{kerMat} has a stationary point with $\T i\neq (0,0)$ iff $\mR$ defined in \eqr{RMat} is bigger than $1$,  that  this stationary point is locally stable}.

This proof reveals also that $\mR$ may also be obtained as the spectral radius of the integral of the matrix kernel
$K(\tau)=Diag(\sd)  a(\tau), \sd=(\sd^{1},\sd^{2})$.  We will call the matrix $K(\tau)$ above Diekmann  matrix kernel, in reference
to \cite[(5.9)]{Diek18}, where such matrices seem to have appeared for the first time. Note that in our situation $K(\tau)$ is explicit.

 Another  question worth further research is whether an age of infection kernel may  be associated to
  \AB\ with one susceptible  class, but with a matrix $B$ of rank bigger than $1$.

  The results of this paper suggest an interesting alternative to the classical statistical  approaches to \ME, which usually start by postulating an epidemiologic model, and then estimate its parameters.
  The alternative consists  in accepting that the  model is not fully known. and estimate instead  an age of infection kernel. Subsequently,
  a good matrix exponential approximation of the kernel will translate directly into an epidemiologic model, to be confronted with the currently accepted ones.

{\bf Aknowledgements}. We thank Tyler Cassidy, Odo Diekmann, and James Watmough for useful remarks.

\bibliographystyle{amsalpha}
\bibliography{Pare40}

\newcommand{\etalchar}[1]{$^{#1}$}
\providecommand{\bysame}{\leavevmode\hbox to3em{\hrulefill}\thinspace}
\providecommand{\MR}{\relax\ifhmode\unskip\space\fi MR }
\providecommand{\MRhref}[2]{%
  \href{http://www.ams.org/mathscinet-getitem?mr=#1}{#2}
}
\providecommand{\href}[2]{#2}
\begin{thebibliography}{ABvdD{\etalchar{+}}07}

\bibitem[AAB{\etalchar{+}}23]{AABBGH}
Florin Avram, Rim Adenane, Lasko Basnarkov, Gianluca Bianchin, Dan Goreac, and
  Andrei Halanay, \emph{An age of infection kernel, an $r_0$ formula and
  further results for arino-brauer $a,b$ matrix epidemic models with varying
  population, waning immunity, and disease and vaccination fatalities},
  Mathematics (2023).

\bibitem[AAH22]{AAH}
Florin Avram, Rim Adenane, and Andrei Halanay, \emph{New results and open
  questions for sir-ph epidemic models with linear birth rate, waning immunity,
  vaccination, and disease and vaccination fatalities}, Symmetry \textbf{14}
  (2022), no.~5, 995.

\bibitem[AAK21]{AAK}
Florin Avram, Rim Adenane, and David~I Ketcheson, \emph{A review of matrix
  {SIR} arino epidemic models}, Mathematics \textbf{9} (2021), no.~13, 1513.

\bibitem[AAV]{AAV}
Rim Adenane, Florin Avram, and Rafael Villanueva, \emph{Calibrating the sir,
  seir, and slair epidemic models to influenza data, with mathematica},
  Mathematica Journa \textbf{submitted}.

\bibitem[ABG20]{Ando}
Alessia And{\`o}, Dimitri Breda, and Giulia Gava, \emph{How fast is the linear
  chain trick? a rigorous analysis in the context of behavioral epidemiology.},
  Mathematical Biosciences and Engineering \textbf{17} (2020), no.~5,
  5059--5085.

\bibitem[ABvdD{\etalchar{+}}07]{Arino}
Julien Arino, Fred Brauer, Pauline van~den Driessche, James Watmough, and
  Jianhong Wu, \emph{A final size relation for epidemic models}, Mathematical
  Biosciences \& Engineering \textbf{4} (2007), no.~2, 159.

\bibitem[AKK{\etalchar{+}}20]{Ansumali}
Santosh Ansumali, Shaurya Kaushal, Aloke Kumar, Meher~K Prakash, and
  M~Vidyasagar, \emph{Modelling a pandemic with asymptomatic patients, impact
  of lockdown and herd immunity, with applications to sars-cov-2}, Annual
  reviews in control (2020).

\bibitem[AP20]{arino2020simple}
Julien Arino and St{\'e}phanie Portet, \emph{A simple model for covid-19},
  Infectious Disease Modelling \textbf{5} (2020), 309--315.

\bibitem[Bac12]{bacaer2012model}
Nicolas Baca{\"e}r, \emph{The model of {K}ermack and {M}c{K}endrick for the
  plague epidemic in {B}ombay and the type reproduction number with
  seasonality}, Journal of mathematical biology \textbf{64} (2012), no.~3,
  403--422.

\bibitem[BDDG{\etalchar{+}}12]{Breda}
Dimitri Breda, Odo Diekmann, WF~De~Graaf, A~Pugliese, and R~Vermiglio, \emph{On
  the formulation of epidemic models (an appraisal of kermack and mckendrick)},
  Journal of biological dynamics \textbf{6} (2012), no.~sup2, 103--117.

\bibitem[Bra05]{Brauer05}
Fred Brauer, \emph{The kermack--mckendrick epidemic model revisited},
  Mathematical biosciences \textbf{198} (2005), no.~2, 119--131.

\bibitem[CCH18]{cassidy2018recipe}
Tyler Cassidy, Morgan Craig, and Antony~R Humphries, \emph{A recipe for state
  dependent distributed delay differential equations}, arXiv preprint
  arXiv:1811.05930 (2018).

\bibitem[CD15]{Champredon15}
David Champredon and Jonathan Dushoff, \emph{Intrinsic and realized generation
  intervals in infectious-disease transmission}, Proceedings of the Royal
  Society B: Biological Sciences \textbf{282} (2015), no.~1821, 20152026.

\bibitem[CDE18]{Champredon}
David Champredon, Jonathan Dushoff, and David~JD Earn, \emph{Equivalence of the
  erlang-distributed seir epidemic model and the renewal equation}, SIAM
  Journal on Applied Mathematics \textbf{78} (2018), no.~6, 3258--3278.

\bibitem[DGM18]{Diek18}
Odo Diekmann, Mats Gyllenberg, and JAJ Metz, \emph{Finite dimensional state
  representation of linear and nonlinear delay systems}, Journal of Dynamics
  and Differential Equations \textbf{30} (2018), no.~4, 1439--1467.

\bibitem[DHB13]{DiekHeesBrit}
O.~Diekmann, H.~Heesterbeek, and T.~Britton, \emph{{Mathematical Tools for
  Understanding Infectious Disease Dynamics}}, Princeton Univ. Press, 2013.

\bibitem[DHM90]{Diek}
Odo Diekmann, Johan Andre~Peter Heesterbeek, and Johan~AJ Metz, \emph{On the
  definition and the computation of the basic reproduction ratio r0 in models
  for infectious diseases in heterogeneous populations}, Journal of
  mathematical biology \textbf{28} (1990), no.~4, 365--382.

\bibitem[DHR10]{Diek10}
Odo Diekmann, JAP Heesterbeek, and Michael~G Roberts, \emph{The construction of
  next-generation matrices for compartmental epidemic models}, Journal of the
  royal society interface \textbf{7} (2010), no.~47, 873--885.

\bibitem[DI22]{Diek22}
Odo Diekmann and Hisashi Inaba, \emph{A systematic procedure for incorporating
  separable static heterogeneity into compartmental epidemic models}, arXiv
  preprint arXiv:2207.02339 (2022).

\bibitem[DSZ22]{domo22}
Alexander Domoshnitsky, Alexander Sitkin, and Lea Zuckerman, \emph{Mathematical
  modeling of covid-19 transmission in the form of system of
  integro-differential equations}, Mathematics \textbf{10} (2022), no.~23,
  4500.

\bibitem[Fen07a]{Feng}
Zhilan Feng, \emph{Final and peak epidemic sizes for {SEIR} models with
  quarantine and isolation}, Mathematical Biosciences \& Engineering \textbf{4}
  (2007), no.~4, 675.

\bibitem[Fen07b]{Feng07}
Zhilan Feng, \emph{Final and peak epidemic sizes for \textit{SEIR} models with
  quarantine and isolation}, Mathematical Biosciences and Engineering
  \textbf{4} (2007), no.~4, 675--686.

\bibitem[HD96]{Hees}
JAP Heesterbeek and Klaus Dietz, \emph{The concept of ro in epidemic theory},
  Statistica neerlandica \textbf{50} (1996), no.~1, 89--110.

\bibitem[Het00]{Heth}
H.~W. Hethcote, \emph{The mathematics of infectious diseases}, S(aturate)AM
  review \textbf{42} (2000), no.~4, 599--653.

\bibitem[HK19]{Hurtado19}
Paul~J Hurtado and Adam~S Kirosingh, \emph{Generalizations of the ‘linear
  chain trick': incorporating more flexible dwell time distributions into mean
  field ode models}, Journal of mathematical biology \textbf{79} (2019), no.~5,
  1831--1883.

\bibitem[HR07]{heesterbeek2007type}
JAP Heesterbeek and MG~Roberts, \emph{The type-reproduction number t in models
  for infectious disease control}, Mathematical biosciences \textbf{206}
  (2007), no.~1, 3--10.

\bibitem[HR21]{Hurtado21}
Paul~J Hurtado and Cameron Richards, \emph{Building mean field ode models using
  the generalized linear chain trick \& {M}arkov chain theory}, Journal of
  Biological Dynamics (2021), 1--25.

\bibitem[HSW05]{heffernan2005perspectives}
Jane~M Heffernan, Robert~J Smith, and Lindi~M Wahl, \emph{Perspectives on the
  basic reproductive ratio}, Journal of the Royal Society Interface \textbf{2}
  (2005), no.~4, 281--293.

\bibitem[KBJ19]{kim2019global}
Sungchan Kim, Jong~Hyuk Byun, and Il~Hyo Jung, \emph{Global stability of an
  seir epidemic model where empirical distribution of incubation period is
  approximated by coxian distribution}, Advances in Difference Equations
  \textbf{2019} (2019), 1--15.

\bibitem[KM27]{Ker}
W.~O. Kermack and A.~G. McKendrick, \emph{A contribution to the mathematical
  theory of epidemics}, Proc. R. Soc. Lond. Series A, Containing papers of a
  mathematical and physical character \textbf{115} (1927), no.~772, 700--721.

\bibitem[LB{\etalchar{+}}11]{failure}
Jing Li, Daniel Blakeley, et~al., \emph{The failure of r0}, Computational and
  Mathematical Methods in Medicine \textbf{2011} (2011).

\bibitem[NANT20]{torres20}
Faiccal Ndairou, Ivan Area, Juan~J Nieto, and Delfim~FM Torres,
  \emph{Mathematical modeling of covid-19 transmission dynamics with a case
  study of wuhan}, Chaos, Solitons \& Fractals \textbf{135} (2020), 109846.

\bibitem[OSS22]{Ott}
Stefania Ottaviano, Mattia Sensi, and Sara Sottile, \emph{Global stability of
  sairs epidemic models}, Nonlinear Analysis: Real World Applications
  \textbf{65} (2022), 103501.

\bibitem[Ple77]{plemmons1977m}
Robert~J Plemmons, \emph{M-matrix characterizations. i—nonsingular
  m-matrices}, Linear Algebra and its Applications \textbf{18} (1977), no.~2,
  175--188.

\bibitem[Ria20]{Riano}
Germ{\'a}n Ria{\~n}o, \emph{Epidemic models with random infectious period},
  medRxiv (2020).

\bibitem[RS13]{RobSti}
Marguerite Robinson and Nikolaos~I Stilianakis, \emph{A model for the emergence
  of drug resistance in the presence of asymptomatic infections}, Mathematical
  biosciences \textbf{243} (2013), no.~2, 163--177.

\bibitem[VdD17]{Van17}
Pauline Van~den Driessche, \emph{Reproduction numbers of infectious disease
  models}, Infectious Disease Modelling \textbf{2} (2017), no.~3, 288--303.

\bibitem[VdDW02]{Van}
Pauline Van~den Driessche and James Watmough, \emph{Reproduction numbers and
  sub-threshold endemic equilibria for compartmental models of disease
  transmission}, Mathematical biosciences \textbf{180} (2002), no.~1-2, 29--48.

\bibitem[VdDW08]{Van08}
P~Van~den Driessche and James Watmough, \emph{Further notes on the basic
  reproduction number}, Mathematical epidemiology, Springer, 2008,
  pp.~159--178.

\bibitem[WRK05]{Wearing2005}
H.J Wearing, P~Rohani, and M.J. Keeling, \emph{Appropriate models for the
  management of infectious diseases}, PLoS Medicine \textbf{7} (2005), no.~2,
  621--627.

\bibitem[WSFC17]{Wang2017}
Xiaojing Wang, Yangyang Shi, Zhilan Feng, and Jingan Cui, \emph{Evaluations of
  interventions using mathematical models with exponential and non-exponential
  distributions for disease stages: the case of ebola}, Bulletin of
  mathematical biology \textbf{79} (2017), 2149--2173.

\bibitem[YB08]{YangBrauer}
Christine~K Yang and Fred Brauer, \emph{Calculation of $ r\_0 $ for
  age-of-infection models}, Mathematical Biosciences \& Engineering \textbf{5}
  (2008), no.~3, 585.

\end{thebibliography}


\end{document}